\newtheorem{theorem}{Theorem}[section]
\newtheorem{lemma}{Lemma}[section]
\newenvironment{proof}[1][Proof:]{\begin{trivlist} \item[\hskip \labelsep {\bfseries #1}]}{\end{trivlist}}
\newcommand{\bra}[1]{\left< #1 \right|} 
\newcommand{\ket}[1]{\left| #1 \right>}
\begin{document}

\title{Extending matchgates into universal quantum computation}

\author{Daniel J. Brod}
\email{brod@if.uff.br}
\affiliation{Instituto de F\'isica, Universidade Federal Fluminense, Av. Gal. Milton Tavares de Souza s/n, \\
Gragoat\'a, Niter\'oi, RJ, 24210-340, Brazil}

\author{Ernesto F. Galv\~ao}
\email{ernesto@if.uff.br}
\affiliation{Instituto de F\'isica, Universidade Federal Fluminense, Av. Gal. Milton Tavares de Souza s/n, \\
Gragoat\'a, Niter\'oi, RJ, 24210-340, Brazil}

\date{\today}

\begin{abstract}
Matchgates are a family of two-qubit gates associated with noninteracting fermions. They are classically simulatable if acting only on nearest neighbors, but become universal for quantum computation if we relax this restriction or use SWAP gates [Jozsa and Miyake, Proc. R. Soc. A \textbf{464}, 3089 (2008)]. We generalize this result by proving that any nonmatchgate parity-preserving unitary is capable of extending the computational power of matchgates into universal quantum computation. We identify the single local invariant of parity-preserving unitaries responsible for this, and discuss related results in the context of fermionic systems.

\end{abstract}
\pacs{03.67.Lx, 05.30.Fk}
\maketitle

\section{Introduction}

There are good reasons to study quantum circuits with limited allowed operations. Experimentally, they may correspond to circuits implementable with current technologies or using a particular physical system. Theoretically, they can help us understand the minimal requirements for the quantum computational speedup achieved in problems such as factoring. When the limited operations are not sufficient for universal quantum computation, identifying the missing resource can be revealing.  As an example, the set of single-qubit gates operating on a separable initial state can be simulated classically, but the addition of any entangling gate to the set makes it universal for quantum computation.

Matchgates are a restricted family of two-qubit gates first described by Valiant \cite{Valiant02} in the context of graph theory. Valiant showed that circuits composed only of such gates acting on nearest-neighbor qubits were classically simulatable. Nearest-neighbor matchgates were shown to be equivalent to noninteracting fermions \cite{Terhal02}, and remain classically simulatable even in the case of adaptive measurements. This is in contrast to noninteracting bosons, which allow for universal quantum computation with adaptive measurements \cite{KLM01}. 

Many results were obtained by studying the algebraic properties of matchgates, with no explicit reference to their fermionic nature. For instance, it was shown that the computational power of nearest-neighbor matchgates is equivalent to that of space-bounded quantum computation \cite{Jozsa10} and of linear threshold gates \cite{Nest10}. It was also shown that a universal set can be obtained by relaxing slightly the nearest-neighbor condition, allowing matchgates to act on next-nearest-neighbor qubits (i.e., one qubit apart) or, equivalently, by adding the SWAP gate to the set \cite{Jozsa08b}. This algebraic study of matchgates is important not only for understanding fundamental properties of quantum circuits, but also to study possible experimental implementations in different physical systems such as linear optics, as proposed in \cite{Ramelow10}.

In this paper we address the question of which resources are sufficient to extend the computational power of matchgates into full quantum computation. We do this by studying parity-preserving two-qubit gates, a natural generalization of matchgates. We expand on the result of \cite{Jozsa08b} by showing that any (nonmatchgate) member of this larger set, together with matchgates, is sufficient to attain universal quantum computation. Our discussion relies on the characterization of two-qubit gates in terms of their nonlocal parameters \cite{Makhlin02, Zhang03}. These are quantities invariant by local, single-qubit unitaries; one example is the entangling power defined in \cite{Zanardi00}. We show that a parity-preserving gate's ability to extend matchgates into quantum universality can be attributed to a particular nonlocal parameter. This identifies the common characteristic shared by gates as different as SWAP and CZ which enables them to boost the computational power of matchgates \cite{Jozsa08b, Bravyi02}.

The paper is structured as follows. In section \ref{sec pp} we define matchgates and parity-preserving gates, recalling important known results and demonstrating the universality of matchgates + SWAP in a simplified way, and in section \ref{sec nlpar} we review the characterization of the nonlocal parameters of two-qubit gates. In section \ref{sec iswap}, we suggest an intuitive generalization of the SWAP into a continuous family of two-qubit gates, which can provably boost the computational power of matchgates. In section \ref{sec matchnlpar}, we provide a complete characterization of matchgates and parity preserving unitaries in terms of their nonlocal parameters, necessary to obtain our main result in section \ref{sec main}. In that section we show that any nonmatchgate parity-preserving gate forms a universal set together with matchgates. Sections \ref{sec disc} and \ref{sec conclusion} contain a discussion of the results, open problems and conclusions.

\section{Review} \label{sec review}

We start by reviewing some basic concepts, definitions and known results which will be necessary for our main discussion. In section \ref{sec pp} we recall the definition of two-qubit matchgates and show how they are inserted in the larger group of parity-preserving matrices, referring to known results regarding simulability and universality of these sets. In section \ref{sec nlpar}, we review the characterization of the nonlocal parameters of general two-qubit gates.

\subsection{Parity-preserving unitaries and matchgates} \label{sec pp}

Let us consider the most general 4x4 unitary matrix that preserves the parity subspaces ($\{ \ket{00},\ket{11} \}$ and $\{ \ket{10},\ket{01} \}$) of a two-qubit system. Denoting it by $G(A,B)$, where $A$ and $B$ are the 2x2 matrices acting on the even and odd subspaces, respectively, we can write it as
\begin{equation} \label{PPgates}
G(A,B) = \left(\begin{array}{cccc}
A_{11} & 0 & 0 & A_{12} \\
0 & B_{11} & B_{12} & 0 \\
0 & B_{21} & B_{22} & 0 \\
A_{21} & 0 & 0 & A_{22}
\end{array}\right).
\end{equation}

Since $G(A,B) \cdot G(C,D)=G(AC,BD)$, the set of all unitary matrices of this type forms a subgroup of U(4). From this product rule and the unitarity of $G(A,B)$ also comes the fact that $A$ and $B$ must be unitary as well. With no loss of generality, we can fix the global phase of $G(A,B)$ by imposing that det($A$)det($B$)$ = 1$. If we also impose the nontrivial constraint that det($A$)=det($B$)= $\pm$ 1, the resulting unitary is known as a \textit{matchgate} \footnote{Sometimes, to maintain common convention, we will not take $G(A,B)$ to belong to SU(4). In this case, $G(A,B)$ will be a matchgate iff det($A$)=det($B$).}. From this point on, parity-preserving (P.P., for short) denotes general unitaries of the form $G(A,B)$, while \textit{matchgate} denotes the special case where the additional determinant constraint is satisfied. Sometimes, we will also refer to ``nonmatchgate P.P. unitaries'', when det($A) \neq $ det($B$).

As mentioned before, the computational power of matchgates has been studied in the literature both in the context of noninteracting fermions and of the algebraic properties of the matrices themselves. The most well-known result is due to Valiant \cite{Valiant02}, who defined matchgates in the context of graph theory and proved that circuits composed only of nearest-neighbor matchgates can be classically simulated. This result was reproduced in the context of fermions by Terhal and DiVincenzo \cite{Terhal02}, who showed that circuits composed only of nearest-neighbor matchgates were mathematically equivalent to systems of noninteracting fermions. They also showed that the simulability result holds even in the presence of adaptive measurements, in contrast to noninteracting bosons, for which adaptive measurements allow for universal quantum computation \cite{KLM01}. Subsequently, several results were obtained regarding how to increment this set of operations to attain universal quantum computation, for example by adding certain simple interactions  \cite{Bravyi02} or allowing for charge measurements \cite{Beenakker04}.

In this work, we are interested in the algebraic properties of matchgates. In particular, we address the question of which unitaries can single-handedly boost the computational power of (nearest-neighbor) matchgates to that of full universal quantum computation. One such example is the SWAP gate, since it replaces the need for long-range interactions, and it is known \cite{Terhal02, Kempe01b} that matchgates acting on arbitrary pairs of qubits form a universal set. Even more, it has been shown in \cite{Jozsa08b} that one does not need to fully remove the nearest-neighbor constraint, and only relaxing it slightly to allow for next-nearest-neighbor interactions (i.e., between qubits one line apart) is sufficient to achieve (encoded) universality. 

Since our work is heavily based on the construction proposed in \cite{Jozsa08b}, we now reproduce a simplified demonstration of this result. In \cite{Jozsa08b}, each logical qubit is encoded in four physical qubits. Here, we will use an encoding which is more economical (each logical qubit is encoded in two physical qubits), and which is equivalent up to some technical reasons which need not concern us. We encode each logical qubit in the even parity subspace of two physical qubits:
\begin{equation} \label{encoding}
 \begin{split}
  \ket{0}_L & = \ket{00}, \\
  \ket{1}_L & = \ket{11}.
 \end{split}
\end{equation}

Since matchgates act separately on parity subspaces, any single-qubit gate $A$ is implemented in the encoded qubit simply by the matchgate $G(A,A)$, as shown in Fig. \ref{JozsaDem}a.

\begin{figure}
     \subfigure[]{
\includegraphics[width=1.8in]{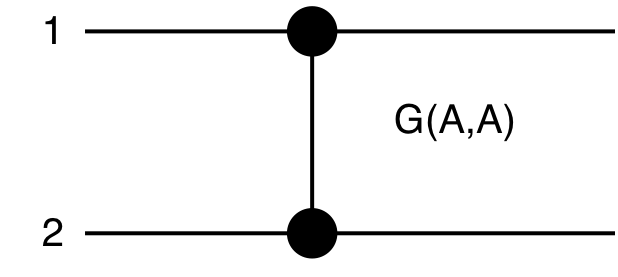}}
 
     \subfigure[]{
\includegraphics[width=2.3in]{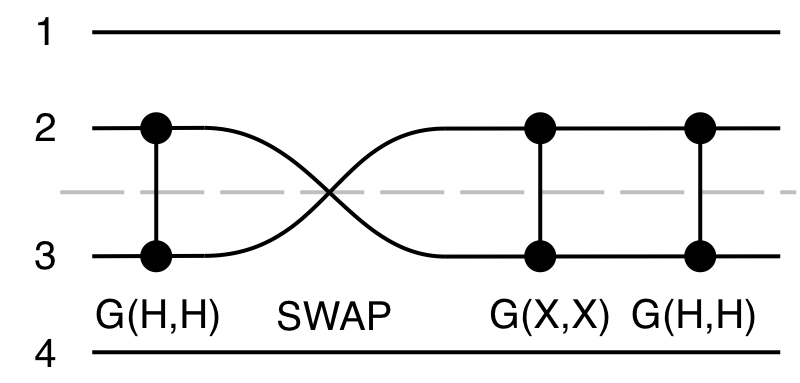}}
     
\caption{Universal set of gates in terms of matchgates and SWAP: (a) Arbitrary single-qubit gate A and (b) CZ gate.}
\label{JozsaDem}
\end{figure} 

Besides arbitrary single-qubit gates, for universality it is sufficient to implement one of the standard maximally entangling gates. One choice specially suitable for this construction is the controlled-$Z$ [$CZ =$ diag$(1, 1, 1, -1)$], depicted in Fig. \ref{JozsaDem}b. Let us label the physical qubits that encode the first logical qubit with numbers 1 and 2, and similarly with the second logical qubit (physical qubits 3 and 4). In view of the encoding, it is simple to see that a CZ between the logical qubits can be implemented by a single CZ between physical qubits 2 and 3 only. This is the simplest choice which preserves the encoding while still avoiding the need for long-range interactions (compare with the CNOT gate, for example, which would have to act on qubits 3 and 4 controlled by the state of either qubit 1 or 2).

It can be easily checked that the circuit in Fig. \ref{JozsaDem}b, implementing
\begin{equation}
CZ_{23} = G(H,H) G(X,X) \; \textrm{SWAP} \; G(H,H),
\end{equation}
has the desired effect and thus completes the universal set (in the encoded space), having used only matchgates and the SWAP gate. Note that qubit initialization and measurements are not an issue here, as both preparation and measurement in the logical computational basis are trivially implemented by preparation and measurement in the physical computational basis, and there is no need for previously entangled or otherwise inaccessible states. In short, this gate-set can efficiently simulate any quantum computation. 

There are other examples in quantum information of a single gate uplifting a classically simulatable set into universal quantum computation. For instance, circuits with nonentangling gates only are classically simulatable, while adding a single entangling gate allows for full quantum computation. Another example is the Toffoli gate, which by itself results in classically simulatable circuits (in fact, Toffoli is universal for classical computation), but adding the Hadamard gate promotes the set to quantum universality \cite{Shi02}. However, the matchgate + SWAP result is also surprising as the added resource seems to be a very trivial one (the simple swapping of qubits), as opposed to the resources in the other cases, which are considered to be fundamentally quantum in nature (entanglement and quantum superposition, respectively). Also of notice is that, in contrast with the Toffoli gate, matchgates are expected to be much weaker even than full \textit{classical} computation, as it can be shown that any boolean function calculated by a circuit composed of nearest-neighbor matchgates is trivial, in the sense that it depends only on one bit of the input \cite{Nest10}. In this sense, the SWAP gate is bridging the gap between a sub-classical complexity class and full quantum computing.

In \cite{Jozsa08b}, the role of the SWAP gate was seen as to provide the long-range interactions missing in nearest-neighbor matchgates. In this case, we could expect other gates that exchange the role of the qubits in the computational basis to do just as well, such as iSWAP [$ = G(I,iX)$], and fermionic SWAP [$ = G(Z,X)$]. It is easy to check, however, that these two gates are matchgates, and thus cannot confer universality to matchgates as SWAP can. Another aspect in which they differ from the SWAP is that they are entangling. The fermionic SWAP, for example, is a perfect entangler (it can be written as CZ$\cdot$SWAP). Their entangling action might create unwanted correlations that corrupt the computation, and be the reason why neither gate can play the role of SWAP in this context.

We will clarify this question by pinpointing precisely the characteristic that a parity-preserving gate must have in order to form a universal set together with matchgates. Before we address this question, we must turn to the characterization of two-qubit unitaries in terms of their nonlocal parameters. 

\subsection{Nonlocal parameters} \label{sec nlpar}

Let us begin this section by stating a result which will be fundamental for the discussion that follows:
\begin{theorem} \label{NonlocalTheorem}\cite{Kraus01, Khaneja01}
Any two-qubit gate $U \in SU(4)$ can be written as 
\begin{align} \label{NonlocalPar}
U & =  (U_1 \otimes U_2)  U_{NL} (V_1 \otimes V_2) \notag \\
& = (U_1 \otimes U_2) e^{i \left ( a X \otimes X + b Y \otimes Y + c Z \otimes Z \right ) } (V_1 \otimes V_2) 
\end{align}
where $U_i$ and $V_i$ are single-qubit gates on qubit $i$, and $a$, $b$ and $c$ are real parameters in the interval $[0,\frac{\pi}{2})$.
\end{theorem}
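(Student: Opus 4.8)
The plan is to realize this identity as the $KAK$ (Cartan) decomposition of $SU(4)$ relative to the local subgroup $SU(2) \otimes SU(2)$, but to carry it out concretely by passing to the magic (Bell) basis, which turns the abstract decomposition into an ordinary simultaneous diagonalization. First I would introduce the fixed unitary $Q$ mapping the computational basis onto the four Bell states, with phases chosen so that $Q^\dagger (U_1 \otimes U_2) Q$ is a \emph{real} orthogonal matrix for every pair of single-qubit gates $U_1, U_2 \in SU(2)$. The structural fact to establish here is that $Q^\dagger (SU(2) \otimes SU(2)) Q = SO(4)$; this exceptional isomorphism between the local group and the rotation group is what cleanly separates local from nonlocal content and makes the rest of the argument linear-algebraic.

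Writing $\tilde U = Q^\dagger U Q$, I would next form the matrix $M = \tilde U^T \tilde U$, which is symmetric and unitary, and diagonalize it. Splitting $M = M_R + i M_I$ into real and imaginary parts, symmetry makes $M_R$ and $M_I$ real symmetric, and unitarity of $M$ forces $M_R M_I = M_I M_R$; two commuting real symmetric matrices are simultaneously diagonalized by a single real orthogonal $O_R$, giving $O_R M O_R^T = \Lambda$ with $\Lambda$ diagonal and unitary. Choosing a diagonal square root $D$ with $D^2 = \Lambda$ and setting $O_L = \tilde U O_R^T D^{-1}$, a short computation shows $O_L O_L^T = I$ and $O_L^\dagger O_L = I$ hold simultaneously, which forces $O_L$ to be real orthogonal. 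This produces $\tilde U = O_L D O_R$ with $O_L, O_R \in O(4)$ and $D$ diagonal.

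I would then translate the normal form back to the computational basis. The factors $O_L$ and $O_R$ return to local gates $U_1 \otimes U_2$ and $V_1 \otimes V_2$ (adjusting determinants and phases so the orthogonal factors lie in $SO(4)$ and the overall product stays in $SU(4)$), while $Q D Q^\dagger$ becomes a unitary that is simultaneously diagonal in the shared eigenbasis of $X \otimes X$, $Y \otimes Y$ and $Z \otimes Z$. Since the four Bell states realize precisely the sign patterns of $(\langle X \otimes X \rangle, \langle Y \otimes Y \rangle, \langle Z \otimes Z \rangle)$, any diagonal phase of unit determinant in that basis is matched by a unique choice of $a, b, c$ in $e^{i(a X \otimes X + b Y \otimes Y + c Z \otimes Z)}$, identifying $U_{NL}$.

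Finally, I would reduce $a, b, c$ into the fundamental domain $[0, \frac{\pi}{2})$. Conjugation by local Clifford gates permutes the three Pauli products and flips their signs in pairs, so any residual permutation or sign of $(a,b,c)$, together with the periodicity of the exponential, can be absorbed into the local factors. I expect the two genuinely delicate points to be the reality argument for $O_L$ (the simultaneous $O_L O_L^T = I$ and $O_L^\dagger O_L = I$ step, which is slick but easy to mishandle) and this last reduction, where one must track carefully how the discrete (Weyl-group) symmetries act in order to land in the half-open interval without omission or double counting.
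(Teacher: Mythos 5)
The paper does not prove this theorem at all: it is quoted as a known result and attributed to \cite{Kraus01, Khaneja01}, so there is no in-paper proof to compare against. Your proposal is the standard magic-basis argument (essentially the route of Kraus and Cirac, and of Makhlin, as opposed to the abstract Cartan $KAK$ route of Khaneja et al.), and it is sound: the isomorphism $Q^\dagger\bigl(SU(2)\otimes SU(2)\bigr)Q = SO(4)$, the simultaneous diagonalization of the commuting real and imaginary parts of $\tilde U^T\tilde U$, the verification that $O_L=\tilde U O_R^T D^{-1}$ satisfies both $O_LO_L^T=I$ and $O_L^\dagger O_L=I$ and is hence real orthogonal, and the identification of $QDQ^\dagger$ with $e^{i(aX\otimes X+bY\otimes Y+cZ\otimes Z)}$ via the common Bell eigenbasis are all correct. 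The two points you flag as delicate are indeed where the bookkeeping lives, and there is one more worth naming explicitly: the determinant and global-phase accounting. One must push possible signs $\det O_{L,R}=-1$ into $D$ (e.g.\ by conjugating with $\mathrm{diag}(-1,1,1,1)$, which preserves diagonality of $\Lambda$) so that the orthogonal factors land in $SO(4)$, and note that $e^{i(aX\otimes X+bY\otimes Y+cZ\otimes Z)}$ always has unit determinant, so a residual fourth root of unity may need to be absorbed as a global phase into the local factors; the theorem as stated permits this since $U_i$, $V_i$ are not required to lie in $SU(2)$. With that caveat, and with the Weyl-group reduction to $[0,\tfrac{\pi}{2})$ carried out via local Cliffords and period shifts as you describe, the argument is complete and consistent with the sources the paper cites.
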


In other words, Theorem \ref{NonlocalTheorem} states that, out of the 15 parameters that define a general 4x4 unitary matrix (up to a global phase), only 3 are truly nonlocal (in the sense that they require some interaction between the two qubits), while the remaining 12 can be implemented by two sets of single-qubit gates, to the left and to the right of this nonlocal ``core''.

The set of nonlocal parameters described above is convenient because it gives an explicit way to parametrize two-qubit unitaries, but it does not seem to offer much physical insight. One reason is the non-uniqueness of the decomposition, as we can implement permutations of the type $a \leftrightarrow c$ using only single-qubit gates (in this case, $H^{\otimes 2}$). This generates an ambiguity if one tries to characterize unitaries in terms of the triplet $\{ a,b,c \}$, since any unitary gate $U$ can be associated with a (discrete) number of different sets of parameters, related to each other by permutations. One usual way of dealing with this is choosing $a \geq b \geq c$ or a similar condition, which eliminates the ambiguity by fixing one among all possible sets for a particular matrix (alternatively, for a geometrical approach in terms of what is known as the Weyl chamber, see \cite{Zhang03}). As we will see in the next sections, however, for the sets of matrices we are interested in this is not an issue, and we will make no such restriction.

While $\{ a,b,c \}$ are not, strictly speaking, local invariants, one can always write true local invariants as (symmetric) functions of these parameters. One such example is the set of invariants defined by Makhlin in \cite{Makhlin02}, as shown in \cite{Zhang03}. Another example, in which we will be particularly interested in the discussion that follows, is the entangling power defined by Zanardi \textit{et al.} in \cite{Zanardi00}.

The entangling power $e_p(U)$ of an unitary $U$ is defined as the average entanglement generated by the action of $U$ on all product states $\ket{\psi_1} \otimes \ket{\psi_2}$ \cite{Zanardi00}:
\begin{equation}
e_p(U) = \overline{E(U \ket{\psi_1} \otimes \ket{\psi_2})}^{(\psi_1, \psi_2)},
\end{equation}
where the bar denotes average with respect to some probability distribution $p(\psi_1, \psi_2)$. Simple manipulations show that this is equal to
\begin{equation}
e_p(U) = 2 \; \textrm{tr} \left[ U^{\otimes 2} \Omega_p U^{\dagger \otimes 2} P_{13}^{-} \right ], 
\end{equation}
where $P_{13}^{-}$ is the projector on the antisymmetric subspace of qubits 1 and 3 (notice that the trace is taken over a doubled Hilbert space), while
\begin{equation*}
\Omega_p = \int d \mu(\psi_1, \psi_2) (\ket{\psi_1} \bra{\psi_1} \otimes \bra{\psi_2} \ket{\psi_2})^{\otimes 2},
\end{equation*}
and $d \mu$ denotes the measure over the space of product states induced by probability distribution $p(\psi_1,\psi_2)$.

It can be shown that, if the average is taken over the uniform distribution, the entangling power is both local invariant and SWAP invariant (that is, it remains the same if $U$ is multiplied by the SWAP or by single-qubit gates on either side), which, as previously mentioned, suggests that it must be written as some function of $\{ a,b,c \}$. If fact, simply by plugging Eq. (\ref{NonlocalPar}) in the above definition (and rescaling the value of $e_p(U)$ so it goes from 0, for local gates and SWAP, up to 1, for perfect entanglers) we find that
\begin{equation} \label{entanglement}
e_p(U) = 1 - \textrm{cos}^2 2a \; \textrm{cos}^2 2b \; \textrm{cos}^2 2c - \textrm{sin}^2 2a \; \textrm{sin}^2 2b \; \textrm{sin}^2 2c.
\end{equation}

It is clear from the expression above that any local gate has $e_p(U_1 \otimes U_2) = 0$. It can also be seen that $e_p(CNOT)=1$, since for the CNOT $\{ a,b,c \}=\{\frac{\pi}{4}, 0, 0\}$, and that $e_p($SWAP$)=0$, since for the SWAP we have $\{\frac{\pi}{4}, \frac{\pi}{4}, \frac{\pi}{4}\}$, all results which were expected. We also see that the above expression is invariant under permutations of $\{a,b,c\}$, another previously anticipated feature.

Recall that the set of single-qubit gates together with any entangling gate is universal for quantum computation. We can check that a gate is entangling by calculating the non-local invariant defined above. Can a similar characterization be given in the context of matchgates? The SWAP gate, which seems to be responsible for boosting their computational power, cannot be implemented only by single-qubit gates, and thus must clearly have some nonlocal property (even though it is not entangling). We will show in the next section that this property can be characterized in terms of the $\{ a,b,c \}$ parameters above, and that it is fundamental for granting computational power to matchgates.

\section{Extending matchgates with parity-preserving unitaries}

In this section, we present our main result, regarding which parity-preserving (P.P.) unitaries can be added to the set of nearest-neighbor matchgates in order to create a universal set. In section \ref{sec iswap}, we define a continuous family of P.P. gates that interpolates between the SWAP and iSWAP, in order to understand why SWAP can boost nearest-neighbor matchgates to quantum universality whereas iSWAP cannot. In section \ref{sec matchnlpar} we characterize an arbitrary P.P. gate in terms of its nonlocal parameters, identifying both the subgroup of matchgates and the continuous family of section \ref{sec iswap} in this context. Finally, in section \ref{sec main} we prove the main theorem by determining under which conditions a specific P.P. gate can replace the SWAP in the universal set of section \ref{sec pp}, and explicitly constructing a universal set whenever these conditions are satisfied.

\subsection{Interpolating between SWAP and iSWAP} \label{sec iswap}

Let us begin by addressing a question raised at the end of section \ref{sec pp}: why is it that, in the context of nearest-neighbor matchgates, the SWAP gate is sufficient to generate a universal set, but other gates similar to it (such as iSWAP or fermionic SWAP) are not? In order to answer this question, let us replace the SWAP by a continuous family that interpolates between it and iSWAP, given by
\begin{equation}
G(I, e^{i \tau} X) =
\left(\begin{array}{cccc}
1 & 0 & 0 & 0 \\
0 & 0 & e^{i \tau} & 0 \\
0 & e^{i \tau} & 0 & 0 \\
0 & 0 & 0 & 1
\end{array}\right),
\end{equation}
where $\tau$ goes from 0 (SWAP) to $\pi / 2$ (iSWAP). This is clearly a P.P. gate [cf. Eq. (\ref{PPgates})], but it is also a matchgate whenever $\tau$ = $\pi/2$ (i.e. only in the case of the iSWAP). Thus, we expect the gate that arises from this substitution in the circuit of Fig. \ref{JozsaDem}b to complete the universal set (together with the single-qubit gates from Fig. \ref{JozsaDem}a) for one extreme value of $\tau$, and to fail to do so for the other extreme. What about intermediate values? In those cases, the resulting gate implemented in the logical qubits is given by $G(H,H) G(X,X) G(I, e^{i \tau} X) G(H,H)$ or, explicitly, by:
\begin{equation}
\left(\begin{array}{cccc}
1 & 0 & 0 & 0 \\
0 & e^{i \tau} & 0 & 0 \\
0 & 0 & e^{i \tau} & 0 \\
0 & 0 & 0 & -1
\end{array}\right).
\end{equation}

This matrix is diagonal, and thus preserves the encoding [cf. Eq. (\ref{encoding})], which is a requirement for it to be used in an encoded universal set. Simple calculations show us that the three nonlocal parameters [cf. Eq. (\ref{NonlocalPar})] for this gate are $\{ 0 , 0, \frac{\pi}{4} - \frac{\tau}{2} \}$ and so, by Eq. (\ref{entanglement}), we find that its entangling power is $\cos^2 \tau$. This confirms our previous intuition: the entangling power is maximum when we use the SWAP, but goes to 0 as the gate used approaches the iSWAP, in which case any circuit composed of these gates must become classically simulatable. The simulability of the resulting circuit can be understood in two ways: either by remembering that, in this case, the whole circuit is composed of matchgates, or by noticing that, when the two-qubit gate creates no entanglement, any circuit acting on the logical qubits can be expressed in terms only of single-qubit gates.

However, perfect entanglers are not necessary to form a universal set \cite{Bremner02}. If arbitrary single-qubit gates are available, a gate that creates any (nonzero) amount of entanglement is sufficient for universal quantum computation, with at most a polynomial overhead with respect to the set with perfect entanglers. In view of this, we see that the $G(I, e^{i \tau} X)$ gate above is sufficient to achieve universal quantum computation together with matchgates if and only if $\tau \neq \pi/2$ - or, equivalently, \textit{if and only if it is not a matchgate}.

This shows that there is a continuous family of gates akin to the SWAP which can be used to create a universal set. Furthermore, the inability of iSWAP or fermionic SWAP to provide the extra computational power cannot be attributed to the fact that they are also entanglers, since the family $G(I, e^{i \tau} X)$ includes members with all possible values of entangling power. While it is true that the most efficient gates in this set are the least entangling (which is evident by the particular cases of the SWAP and iSWAP), it will become clear in the next sections that this is just an attribute of this particular family, and there is no such relation for P.P. gates in general.

This shows that the parameter $\tau$ in the parametrization above seems fundamental for the gate's ability to generate a universal set together with matchgates. In what follows, we shall relate $\tau$ to the nonlocal parameters defined in the previous section and to the relative phase between the submatrices of a P.P. gate (section \ref{sec pp}), investigating its importance to the computational power of general P.P. matrices when associated with matchgates.

\subsection{Nonlocal parameters of P.P. matrices} \label{sec matchnlpar}

Let us start by obtaining a convenient parametrization of general P.P. matrices. To this end, let us recall that any single-qubit unitary matrix can be parametrized as
\begin{equation*}
\left(\begin{array}{cc}
\cos \theta e^{i (\beta + \alpha )} & i \sin{\theta} e^{i (\beta + \gamma )} \\
i \sin{\theta} e^{i (\beta - \gamma )} & \cos{\theta} e^{i (\beta - \alpha )}
\end{array}\right),
\end{equation*}
where $\theta$, $\beta$, $\alpha$ and $\gamma$ are real parameters in the interval $[0,2 \pi ]$. Notice that the determinant of this matrix, $e^{2 i \beta}$, depends only on $\beta$, which will be crucial for the following discussion. By parameterizing $A$ and $B$ this way, the most general P.P. matrix can be written as 
\begin{widetext}
\begin{equation} \label{GenPP}
G(A,B) = \left(\begin{array}{cccc}
\cos \theta \; e^{i (\beta + \alpha )} & 0 & 0 & i \sin{\theta} \; e^{i (\beta + \mu )} \\
0 & \cos \phi \; e^{i (-\beta + \gamma )} & i \sin{\phi} \; e^{i (-\beta + \nu )} & 0 \\
0 & i \sin{\phi} \; e^{i (-\beta - \nu )} & \cos \phi \; e^{i (-\beta - \gamma )} & 0 \\
i \sin{\theta} \; e^{i (\beta - \mu )} & 0 & 0 & \cos \theta \; e^{i (\beta - \alpha )}
\end{array}\right)
\end{equation}
in terms of 7 real parameters: $\{ \theta, \alpha, \gamma, \phi, \mu, \nu, \beta \}$. Notice that, while $A$ and $B$ are each described by 4 parameters, $G(A,B)$ is defined only up to global phase, which means we can choose the determinant parameter suitably such that det($A$) = det($B$)$^{-1}$ = $e^{2 i \beta}$.

How much nonlocal freedom remains in the 7-parameter family above? In other words, how does the P.P. restriction constrain the nonlocal parameters of $G(A,B)$? To answer this, let us explicitly write the nonlocal ``core'' defined in Eq. (\ref{NonlocalPar}):
\begin{equation} \label{NonlocalCore}
U_{NL} =
\left(\begin{array}{cccc}
\cos (a-b) \; e^{i c} & 0 & 0 & i \sin{(a-b)} \; e^{i c} \\
0 & \cos (a+b) \; e^{-i c} & i \sin{(a+b)} \; e^{-i c} & 0 \\
0 & i \sin{(a+b)} \; e^{-i c} & \cos{(a+b)} \; e^{-i c} & 0 \\
i \sin{(a-b)} \; e^{i c} & 0 & 0 & \cos (a-b) e^{i c}
\end{array}\right).
\end{equation}
\end{widetext}

Comparing Eq. (\ref{GenPP}) and Eq. (\ref{NonlocalCore}), we see that $U_{NL}$ is, itself, a P.P. gate. Not only that, but simple inspection shows us that the nonlocal parameters of $G(A,B)$ are given by $\{ \frac{\theta + \phi}{2}, \frac{\phi - \theta}{2}, \beta \}$, which points us to two important facts. First, that general P.P. gates can have any value of the nonlocal parameters (and, consequently, of any local invariant derived from them), which means that any matrix in U(4) is locally equivalent to a P.P. gate. Second, that the matchgate condition ($\beta=0$) is equivalent to fixing one of these nonlocal parameters ($c=0$).

An important remark must be made here, regarding the previously mentioned ambiguity of the triplet $\{ a,b,c \}$. In the previous section, we mentioned that permutations between these parameters could be implemented using only local gates, which meant these parameters were not local invariants \textit{per se} (and any true invariant, such as the entangling power in Eq. (\ref{entanglement}), must be symmetric under these permutations). Notice, however, that the local gates which implement these permutations are, in general, not parity-preserving (the Hadamard gate, for example). 
In fact, as will become clear below, all single-qubit P.P. gates commute with the $Z\otimes Z$ gate. This means that the P.P. condition distinguishes parameter $c$ from the other two in a qualitative way, by making it a true local invariant (and so our characterization of matchgates as the particular case of P.P. gates with $c=0$ is meaningful).

Now compare Eq. (\ref{GenPP}) and Eq. (\ref{NonlocalCore}) again. Three of the independent parameters of $G(A,B)$ have been accounted for, in terms of $\{ a,b,c \}$. The remaining four parameters are given by individual phases, and can be achieved by multiplying whole rows and columns by phases (i.e., by applying $Z$-rotations to the left and to the right). In fact, a simple calculation shows that
\begin{equation} \label{PPdecomposition}
G(A,B) = \left[ R_z(\chi _1) \otimes R_z(\chi _2) \right] U_{NL} \left[ R_z(\xi _1) \otimes R_z(\xi _2) \right ],
\end{equation}
with an appropriate choice of $\chi_1$, $\chi_2$, $\xi_1$ and $\xi_2$ in terms of $\alpha$, $\gamma$, $\mu$ and $\nu$. Now, notice that these $Z$-rotations can be rewritten as 
\begin{align*}
R_z(\chi _1) \otimes R_z(\chi _2) = {} & \textrm{diag} \left ( e^{i(\chi _1 + \chi _2)}, e^{i(\chi _1 - \chi _2)},  \right. \notag \\
& \qquad \; \left. e^{i(\chi _2 - \chi _1)}, e^{i(- \chi _1 - \chi _2)} \right ) \notag \\
= {} & G \bigl ( R_z(\chi _1 + \chi _2), R_z(\chi _1 - \chi _2) \bigr ),
\end{align*}
which \textit{is a matchgate}. Together with Eq. (\ref{PPdecomposition}), this leads to a simple result which will be important later on:
\begin{lemma}
Any P.P. gate can be written as
\begin{equation} \tag{\ref{PPdecomposition}$^\prime$}
\begin{split}
G(A,B) = {}& G \bigl ( R_z(\tau_1), R_z(\tau_2) \bigr ) e^{ i(a X \otimes X + b Y \otimes Y + c Z \otimes Z)} \\
& \times G \bigl ( R_z(\tau_3), R_z(\tau_4) \bigr ) .
\end{split}
\end{equation}
\end{lemma}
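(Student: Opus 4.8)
The plan is to prove the lemma by assembling the two explicit parametrizations already at hand, the general P.P.\ form in Eq.~(\ref{GenPP}) and the nonlocal core in Eq.~(\ref{NonlocalCore}), and then recognizing the residual single-qubit dressing as matchgates. First I would match the moduli of corresponding entries: the even block forces $\cos\theta=\cos(a-b)$ and $\sin\theta=\sin(a-b)$, while the odd block forces $\cos\phi=\cos(a+b)$ and $\sin\phi=\sin(a+b)$, so that $\theta=a-b$ and $\phi=a+b$, i.e.\ $a=(\theta+\phi)/2$ and $b=(\phi-\theta)/2$; matching the determinant phase fixes $c=\beta$. This pins down three of the seven parameters of Eq.~(\ref{GenPP}) as exactly the nonlocal parameters $\{a,b,c\}$, leaving the four phases $\{\alpha,\gamma,\mu,\nu\}$ to be produced by local dressing.

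The heart of the argument is then to show those four phases are exactly reproducible by one pair of $Z$-rotations on each side. Multiplying $U_{NL}$ by $R_z(\chi_1)\otimes R_z(\chi_2)$ on the left and $R_z(\xi_1)\otimes R_z(\xi_2)$ on the right multiplies the $(j,k)$ entry by a phase that depends only on its row and column, and in the convention $R_z(\chi)=\mathrm{diag}(e^{i\chi},e^{-i\chi})$ these row and column phases are $\pm(\chi_1\pm\chi_2)$ and $\pm(\xi_1\pm\xi_2)$. Equating these shifts with the phases required by Eq.~(\ref{GenPP}) gives, in the even block, $(\chi_1+\chi_2)\pm(\xi_1+\xi_2)$ equal to $\alpha$ and $\mu$, and in the odd block $(\chi_1-\chi_2)\pm(\xi_1-\xi_2)$ equal to $\gamma$ and $\nu$. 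This is a nonsingular linear system, uniquely solvable for $\chi_1,\chi_2,\xi_1,\xi_2$; the opposite-corner entries (carrying $-\alpha,-\mu,-\gamma,-\nu$) impose no new conditions, being the automatic negatives of those already solved for. This establishes Eq.~(\ref{PPdecomposition}).

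It would then remain only to rewrite the two local factors as matchgates. As already observed in the text, $R_z(\chi_1)\otimes R_z(\chi_2)=\mathrm{diag}(e^{i(\chi_1+\chi_2)},e^{i(\chi_1-\chi_2)},e^{i(\chi_2-\chi_1)},e^{-i(\chi_1+\chi_2)})=G(R_z(\chi_1+\chi_2),R_z(\chi_1-\chi_2))$, each block of which has unit determinant and is therefore a matchgate, and the same holds for the right-hand factor. Setting $\tau_1=\chi_1+\chi_2$, $\tau_2=\chi_1-\chi_2$, $\tau_3=\xi_1+\xi_2$, $\tau_4=\xi_1-\xi_2$ and recalling that $U_{NL}=e^{i(aX\otimes X+bY\otimes Y+cZ\otimes Z)}$ then yields Eq.~(\ref{PPdecomposition}$^\prime$).

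I expect the main obstacle to be the bookkeeping in the second step: confirming that the antisymmetric sign pattern of the phases across the anti-diagonal of Eq.~(\ref{GenPP})---namely $\pm\alpha,\pm\mu$ in the even block and $\pm\gamma,\pm\nu$ in the odd block---is precisely what a product of two diagonal $R_z\otimes R_z$ factors can generate, so that the linear system is simultaneously consistent and of full rank. The reason it succeeds is structural: both the P.P.\ form and the $R_z\otimes R_z$ dressing are parity-preserving and share the same reflection relating the two entries of each $2\times2$ block. One must also check the $R_z$ normalization so that each dressing block genuinely has unit determinant and the global-phase convention $\det A=\det B^{-1}=e^{2i\beta}$ is respected throughout.
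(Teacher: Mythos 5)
Your proposal is correct and follows essentially the same route as the paper: identify the nonlocal parameters $\{a,b,c\}=\{(\theta+\phi)/2,(\phi-\theta)/2,\beta\}$ by comparing Eq.~(\ref{GenPP}) with Eq.~(\ref{NonlocalCore}), absorb the four residual phases $\{\alpha,\gamma,\mu,\nu\}$ into $Z$-rotations on either side, and rewrite each $R_z\otimes R_z$ factor as the matchgate $G(R_z(\chi_1+\chi_2),R_z(\chi_1-\chi_2))$. Your explicit solution of the linear system for $\chi_1,\chi_2,\xi_1,\xi_2$ simply fills in the ``simple calculation'' the paper leaves implicit.
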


The importance of this result is that it offers a decomposition for P.P. gates in terms of matchgates and a single nonmatchgate P.P. gate. Since we are interested in the most general P.P. gate with the ability to enhance the computational power of matchgates, we will be able to disregard the $Z$-rotations in the above decomposition. Being matchgates themselves, they cannot offer any computational power beyond that of the whole matchgate group, and it will be sufficient to consider only the nonlocal term in the middle [i.e. $U_{NL}$ in Eq.(\ref{NonlocalPar})].

\subsection{Main result} \label{sec main}

We are now in condition to generalize the result of section \ref{sec iswap}, by searching for the most general $G(A,B)$ gate to replace the SWAP alongside matchgates in the universal set. First, let us consider how far we can generalize the circuit in Fig. \ref{JozsaDem}b. 

In view of the encoding given in Eq. (\ref{encoding}) we see that, in order to consider gates acting only on the pair $(2, 3)$, the resulting gate must be diagonal. Any nondiagonal component would change the parity of one of the pairs $(1,2)$ or $(3,4)$, thus disrupting the encoding. By the property that $G(A,B)$ is block diagonal, we see that, if both $A$ and $B$ satisfy
\begin{align*}
H \; X \; A \; H & = e^{i c} R_z(\sigma_1), \\
H \; X \; B \; H & = e^{- i c} R_z(\sigma_2),
\end{align*}
then the resulting gate is
\begin{equation*}
G \bigl( e^{ic} R_z(\sigma_1 ), e^{- i c} R_z(\sigma_2) \bigr ),
\end{equation*}
which is the most general diagonal gate.

This means that both $A$ and $B$ must be, at most, $X$-rotations (multiplied by a phase), and it is clear that the family considered previously (which interpolates between SWAP and iSWAP) is a particular case of this. The most general case, thus, is a matrix of the form $G \bigl (e^{i c} R_x(\theta), e^{-ic} R_x(\phi) \bigr )$, given explicitly by
\begin{equation} \label{GeneralUsefulPP}
\left(\begin{array}{cccc}
\cos (\theta) \; e^{i c} & 0 & 0 & i \sin{(\theta)} \; e^{i c} \\
0 & \cos (\phi) \; e^{-i c} & i \sin{(\phi)} \; e^{-i c} & 0 \\
0 & i \sin{(\phi)} \; e^{-i c} & \cos (\phi) \; e^{-i c} & 0 \\
i \sin{(\theta)} \; e^{i c} & 0 & 0 & \cos (\theta) \; e^{i c}
\end{array}\right),
\end{equation}
which has the exact form of the nonlocal gate $U_{NL}$, if we take $a-b = \theta$ and $a+b = \phi$. 

Thus, we conclude that the most general gate which can replace the SWAP in the circuit in Fig. \ref{JozsaDem}b while still preserving the encoding is of the form $U_{NL} = \textrm{exp} \left[ i(a X \otimes X + b Y \otimes Y + c Z \otimes Z) \right ]$ which, by the remark at the end of the previous section, also happens to be the most general P.P. gate we \textit{need} to consider in order to characterize the ability of \textit{any} P.P. gate to boost the computational power of matchgates.

By replacing the SWAP with the gate in Eq. (\ref{GeneralUsefulPP}) above, the resulting gate, $G(H,H) G(X,X) U_{NL}$ $\times \: G(H,H)$, is given explicitly by (in terms of $\{ a, b, c \}$):
\begin{equation}
\left( \begin{array}{cccc}
e^{i (a - b + c)} & 0 & 0 & 0 \\
0 & e^{i (a + b - c)} & 0 & 0 \\
0 & 0 & -e^{i (-a - b - c)} & 0 \\
0 & 0 & 0 & -e^{i (-a + b + c)})
\end{array} \right).
\end{equation}

Again, in the same fashion as in section \ref{sec iswap}, we must determine the gate's entangling power. As long as the entangling power is nonzero, this gate completes the universal set together with the single-qubit gates of Fig. \ref{JozsaDem}a. By obtaining the nonlocal parameters of the above gate and using expression (\ref{entanglement}), we find that its entangling power is $\sin^2(2c) $. Again, as before, this means that the gate creates \textit{some} entanglement as long as $c \neq 0$ - that is, as long as it is not a matchgate. We now have all ingredients for the proof of our main result:

\begin{theorem} \label{MainTheorem}
Let $G(A,B)$ be \textbf{any} parity-preserving unitary, and $M$ be the set of all matchgates + $G(A,B)$. Then, if $G(A,B)$ is not a matchgate [i.e. if det(A) $\neq $ det(B)], a universal set can be created from gates in $M$ acting only on nearest-neighbor qubits.
\end{theorem}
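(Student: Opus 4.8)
The plan is to reduce the general statement to the concrete construction already assembled in Section \ref{sec main}, using the decomposition of the Lemma to strip away everything about $G(A,B)$ except its single genuinely relevant nonlocal parameter. First I would invoke Eq. (\ref{PPdecomposition}$^\prime$) to write the arbitrary P.P. gate as $G(A,B) = G(R_z(\tau_1),R_z(\tau_2))\,U_{NL}\,G(R_z(\tau_3),R_z(\tau_4))$. Since the matchgates form a group, closed under inverses by the product rule $G(A,B)G(C,D)=G(AC,BD)$, and the two flanking $Z$-rotation factors are matchgates, I can solve for the core and write $U_{NL} = G(R_z(-\tau_1),R_z(-\tau_2))\,G(A,B)\,G(R_z(-\tau_3),R_z(-\tau_4))$. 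Hence any circuit built from $M$ can implement $U_{NL}$ directly, so it suffices to prove universality of matchgates together with $U_{NL}$.

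Next I would pin down the value of $c$. Because the flanking factors are also local, being tensor products of single-qubit gates, they cannot alter the nonlocal parameters, so $U_{NL}$ carries exactly the triplet $\{(\theta+\phi)/2,(\phi-\theta)/2,\beta\}$ of $G(A,B)$; in particular $c=\beta$. The nonmatchgate hypothesis $\det(A)\neq\det(B)$ is, in the chosen $[0,\frac{\pi}{2})$ range for the nonlocal parameters, equivalent to $\beta\neq0$, and hence to $c\neq0$.

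The third step is to insert this $U_{NL}$ into the encoded circuit of Fig. \ref{JozsaDem}b, in place of the SWAP, acting on physical qubits $2$ and $3$. As observed above, $U_{NL}$ already has the form of Eq. (\ref{GeneralUsefulPP}), and $G(H,H)\,G(X,X)\,U_{NL}\,G(H,H)$ evaluates to a diagonal matrix. Being diagonal, it preserves the even-parity encoding of Eq. (\ref{encoding}), and its entangling power is $\sin^2(2c)$, which is strictly positive precisely because $c\neq0$. Combining this with the single-qubit logical gates supplied by the matchgates $G(A,A)$ of Fig. \ref{JozsaDem}a, I would then invoke the result of \cite{Bremner02}: arbitrary single-qubit gates plus any gate of nonzero entangling power generate a universal set, with at most polynomial overhead. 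Since every gate used acts on nearest-neighbor physical qubits, and since initialization and measurement are trivial in the logical basis exactly as in Section \ref{sec pp}, this yields the desired encoded universal set.

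The main obstacle I anticipate is the bookkeeping that makes the reduction to $U_{NL}$ rigorous rather than merely plausible: one must confirm both that the flanking $Z$-rotations are genuinely matchgates, so that group closure legitimately isolates the core, and that they are simultaneously local, so that the nonlocal parameters, and in particular the invariant $c$, are left untouched by the reduction. This dual role of the $Z$-rotations is precisely what guarantees that the single relevant feature $c\neq0$ survives. Everything downstream, namely the diagonality of the assembled gate, the entangling-power value $\sin^2(2c)$, and the appeal to \cite{Bremner02}, is then a direct consequence of the explicit computations preceding the theorem statement.
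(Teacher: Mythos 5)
Your proposal is correct and follows essentially the same route as the paper: it uses the Lemma's decomposition $G(A,B) = G(R_z(\tau_1),R_z(\tau_2))\,U_{NL}\,G(R_z(\tau_3),R_z(\tau_4))$ to strip the flanking matchgate $Z$-rotations and isolate $U_{NL}$, substitutes it for the SWAP in the encoded circuit, computes the entangling power $\sin^2(2c)$ of the resulting diagonal gate, and concludes via the result of \cite{Bremner02} that universality holds precisely when $c\neq 0$, i.e.\ when $G(A,B)$ is not a matchgate. This matches the paper's constructive proof step for step, with your inversion of the flanking factors being the same move the paper describes as cancelling the $Z$-rotations using matchgates.
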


\begin{proof}
Let us give a constructive proof by supposing one has the ability to use nearest-neighbor matchgates whenever necessary, in addition to a particular $G(A,B)$ gate.

\newcounter{itemcounter}
\begin{list}{\textbf{\arabic{itemcounter}.}}{\usecounter{itemcounter}\leftmargin=1.0em}
\item Encode logical qubits as $\ket{0}_L = \ket{00}$ and $\ket{1}_L = \ket{11}$;
\item Obtain the parameters for matrix $G(A,B)$ such that it is written in the form
\begin{equation*}
R_z(\chi _1) \otimes R_z(\chi _2) e^{i (a X \otimes X + b Y \otimes Y + c Z \otimes Z)}  R_z(\xi _1) \otimes R_z(\xi _2);
\end{equation*}
\item Use the circuit in Fig. \ref{JozsaDem}a to implement any single-qubit gate;
\item Modify the circuit of Fig. \ref{JozsaDem}b such that the $Z$ rotations cancel out, and all that is left is 
\begin{equation*}
G(H,H) G(X,X) e^{i (a X \otimes X + b Y \otimes Y + c Z \otimes Z)} G(H,H);
\end{equation*}
this is an allowed procedure, since the gates required to modify the circuit are matchgates themselves. The final circuit can be seen in Fig. \ref{OurDem}.
\item The modified circuit now implements the unitary
\begin{equation*} 
\left( \begin{array}{cccc}
e^{i (a - b + c)} & 0 & 0 & 0 \\
0 & e^{i (a + b - c)} & 0 & 0 \\
0 & 0 & -e^{i (-a - b - c)} & 0 \\
0 & 0 & 0 & -e^{i (-a + b + c)})
\end{array} \right) ,
\end{equation*}
which generates entanglement $\sin^2(2c)$. By a known result \cite{Bremner02}, we know that any entangling matrix, along with arbitrary single-qubit gates, is universal for quantum computation. So, the gate implemented by the circuit in Fig. \ref{OurDem} completes the universal set as long as $c \neq 0$ or, in other words, as long as it is not a matchgate.
\end{list}

\begin{figure}
\includegraphics[width=3in]{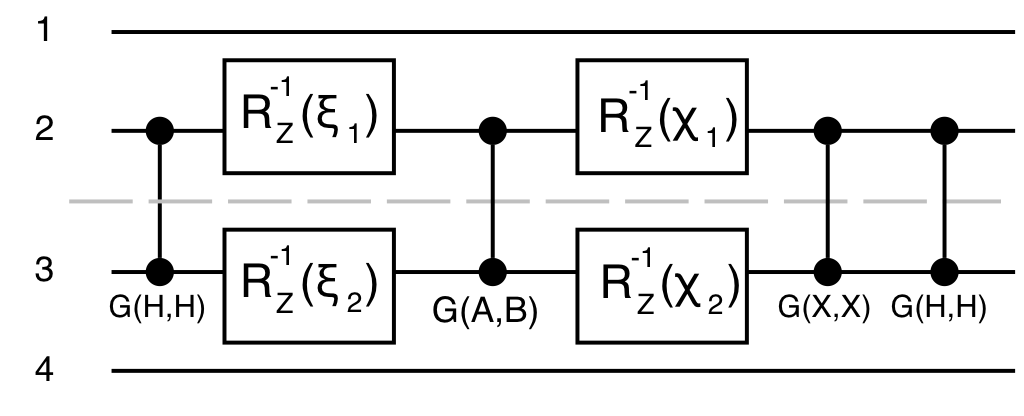}
\caption{Circuit where the SWAP is replaced by G(A,B) and some additional matchgates.}
\label{OurDem}
\end{figure} 

\end{proof}

Note that this result can be complemented by the well-known classical simulability of matchgate circuits \cite{Terhal02}. This means that, at least when restricted to parity-preserving nearest-neighbor operations, the jump from (sub-)classical computation to full universal quantum computation is abrupt, in the sense that either the set $M$ of matchgates + $G(A,B)$ is universal for quantum computation or it generates only classically simulatable circuits. This did not need to be so, as there are recent results about families of quantum circuits which are expected to be neither universal for quantum computation nor classically simulatable. Examples include the so-called IQP circuits \cite{Bremner10} and linear optics without adaptive measurements  \cite{Aaronson11}.

\section{Discussion} \label{sec disc}

So far, we have restricted our attention to P.P. gates, as they are a natural generalization of matchgates. Which other SU(4) unitaries can, together with matchgates, form a universal set? It is easy to see that matchgates + arbitrary single-qubit gates form a universal set, since matchgates include many perfect entanglers \cite{Terhal02}. We also know that the Hadamard gate alone is sufficient for this purpose since, as mentioned in section \ref{sec nlpar}, conjugating a P.P. unitary by $H^{\otimes 2}$ implements the permutation $a \leftrightarrow c$. Thus we could take a matchgate with $a \neq 0$ and, using the $H$ gate, construct a gate with $c \neq 0$ satisfying the conditions of Theorem \ref{MainTheorem}. Clearly, other $H$-like gates which implement similar permutations (i.e. $a \leftrightarrow c$ and $b \leftrightarrow c$, but not $a \leftrightarrow b$) can be used in the same way. It can also be shown easily that such gates are the only single-qubit gates which take matchgates into general P.P. gates by conjugation. While the ability of $H$ to extend the power of matchgates appears very naturally in the present construction, it is not particularly surprising: matchgates contain all single-qubit $Z$-rotations which, when added to the $H$ gate, can generate arbitrary single-qubit gates, completing a universal set even without encoding.

Our results used minor modifications of the encoding proposed in \cite{Jozsa08b} and rely heavily on it. Since we encoded each logical qubit in the parity of the physical qubits, we considered only P.P. gates or, alternatively, operations which are not P.P. themselves, but which preserve a certain subset of P.P. unitaries (i.e. conjugation by $H^{\otimes 2}$). It is an open question whether different encodings may help identify different unitaries which generate a universal set together with matchgates.

Another question we may ask is whether there are other interesting subsets of P.P. gates which can be shown to be either classically simulatable or quantum universal. Since matchgates are P.P. gates with $c=0$, two other natural subsets to consider would be P.P. gates with $a=0$ or $b=0$. Let us consider this latter case, and denote by $M_b$ the subset of all P.P. gates with $b=0$. Notice that, in Figs. \ref{JozsaDem}a and \ref{OurDem}, only $G(A,A)$ matrices are used, along with an arbitrary $G(A,B)$ and local $Z$-rotations. Inspecting Eq. (\ref{GenPP}) we see that, in addition to having $c=0$, $G(A,A)$ has $b=0$ (i.e. $\phi = \theta$). This means that the present construction already uses mostly gates from $M_b$. By choosing $G(A,B)$ as any gate from $M_b$ with $c \neq 0$, we satisfy the conditions of Theorem \ref{MainTheorem}, and thus complete a universal set. This proves that P.P. gates with $b=0$, or equivalently with $a=0$, are already universal for quantum computation, in sharp contrast with the $c=0$ case. Thus, the theorem actually ascribes a special status for parameter $c$, which we already anticipated in section \ref{sec matchnlpar} by showing it was qualitatively different from the others.

The above result can be understood easily in the context of the fermionic nature of matchgates. As can be seen in \cite{Terhal02, Jozsa08b}, matchgates are the group generated by the set $\{ X \otimes X$, $Y \otimes Y$, $1 \otimes Z$, $Z \otimes 1$, $X \otimes Y$, $Y \otimes X \}$. Using the Jordan-Wigner transformation, these matrices are shown to be equivalent to Hamiltonians quadratic in creation and annihilation operators, describing the evolution of noninteracting fermions. On the other hand, the $Z \otimes Z$ matrix, related to parameter $c$ in our construction, is equivalent to a Hamiltonian quartic in creation/annihilation operators, and so it describes an interaction. Thus, the break in the permutation symmetry imposed by the P.P. condition, which is not obvious in the algebraic approach, turns out to have a very physical meaning in the fermionic approach.

In fact, in \cite{Bravyi02}, Bravyi and Kitaev have shown that the $exp \{ i \frac{\pi}{4} Z \otimes Z \}$ interaction can be used to implement a $CZ$, which is then used to generate a universal set with fermionic modes. Our result is similar in spirit to this one, but also more general as we provide an explicit construction where \textit{any} nonmatchgate $G(A,B)$ matrix is sufficient to generate a universal set. This alternative approach can be useful if one is attempting to implement circuits of P.P. gates in other physical systems, rather than fermions. In this case, it is not obvious what property gates such as $CZ$ and SWAP have in common, except for their ability to provide matchgates with full quantum computational power. The power of the SWAP is attributed to the long-range interaction it simulates, whereas the CZ is used to generate entanglement between the encoded qubits (see Fig. \ref{JozsaDem}b). Thus, Theorem \ref{MainTheorem} connects both results by identifying the property these gates share, showing that it is precisely the resource missing in the group of matchgates and providing an explicit construction for any other gate of this kind. 

Other interesting subsets of P.P. gates for which similar results are known are the unitary evolutions generated by Heisenberg isotropic ($\mathcal{H}_i= X \otimes X + Y \otimes Y + Z \otimes Z$) or anisotropic ($\mathcal{H}_a = X \otimes X + Y \otimes Y$) interactions \cite{Kempe01a, Kempe01b}. These sets are of great physical interest, as they provide examples where a quantum computer could be built out of a single interaction Hamiltonian. Explicit constructions for said universal sets can be found, respectively, in \cite{DiVincenzo00} and \cite{Kempe02}. One fundamental difference between these results is that the isotropic interaction considered in \cite{DiVincenzo00} acts only on nearest-neighbors, while the anisotropic interaction in \cite{Kempe02} acts on next-nearest-neighbors. Furthermore, it has been shown that the anisotropic interaction is not universal while acting only on nearest-neighbors, meaning that distant interactions are actually necessary in this case \cite{Wu02}. These two results are clearly consistent with our own, as any unitary evolution generated by $\mathcal{H}_a$ must be a matchgate, while $\mathcal{H}_i$ generates nonmatchgate P.P. unitaries. It is important to point out, however, that neither result implies or is implied by our own.

\section{Conclusion}\label{sec conclusion}

In \cite{Jozsa08b} it was shown that the SWAP gate can be used to extend the computational power of nearest-neighbor matchgates to full quantum computation. We have expanded on this result by showing that the SWAP can be replaced by any parity-preserving (P.P.) two-qubit gate which is not a matchgate itself. We have also characterized arbitrary parity-preserving gates in terms of their three nonlocal parameters $\{ a,b,c \}$, showing that matchgates form a particular subgroup of these where $c=0$. Thus, parameter $c$ is seen to be qualitatively different from the other two. This is supported by the fact that the set of P. P. gates with $b=0$ is universal for quantum computation (which is true also for gates with $a=0$).

We have also discussed the relationship between our work and other known results in the context of fermions. While matchgates are equivalent to noninteracting fermions, the parameter $c$ is known to be associated with certain interactions, providing an underlying physical interpretation to its special status. Our result, however, is more general in the sense that it does not rely on the fermionic nature of matchgates, being also suitable for implementations in other physical systems. 

We hope that the characterization of unitaries in terms of their nonlocal parameters will prove fruitful in understanding further aspects of the separation between classical and quantum computation. On a more practical note, we believe it could open the way to the construction of different universal sets of quantum operations, perhaps more adequate for particular experimental quantum computer implementations.
\begin{acknowledgments}
We would like to thank Daniel Jonathan for helpful discussions and the Brazilian funding agency CNPq (Conselho Nacional de Desenvolvimento Cient\'ifico e Tecnol\'ogico) for financial support.
\end{acknowledgments}

\bibliographystyle{unsrt}

\begin{thebibliography}{10}

\bibitem{Valiant02}
L.~G. Valiant,
\newblock {SIAM J. Comput.} \textbf{31}, 1229 (2002).

\bibitem{Terhal02}
B.~M. Terhal and D.~P. DiVincenzo,
\newblock {Phys. Rev. A} \textbf{65}, 032325 (2002).

\bibitem{KLM01}
E.~Knill, R.~Laflamme, and G.~J. Milburn,
\newblock {Nature} \textbf{409}, 46 (2001).

\bibitem{Jozsa10}
R.~Jozsa, B.~Kraus, A.~Miyake, and J.~Watrous,
\newblock {Proc. R. Soc. A} \textbf{466}, 809 (2010).

\bibitem{Nest10}
M.~{Van den Nest},
\newblock {Proc. R. Soc. A} \textbf{467}, 821 (2011).

\bibitem{Jozsa08b}
R.~Jozsa and A.~Miyake,
\newblock {Proc. R. Soc. A} \textbf{464}, 3089 (2008).

\bibitem{Ramelow10}
S.~Ramelow, A.~Fedrizzi, A.~M. Steinberg, and A.~G. White,
\newblock {New Journal of Physics} \textbf{12}, 083027 (2010).

\bibitem{Makhlin02}
Y.~Makhlin,
\newblock {Quant. Inf. Proc.} \textbf{1}, 243 (2002).

\bibitem{Zhang03}
J.~Zhang, J.~Vala, K.~Whaley, and S.~Sastry,
\newblock {Phys. Rev. A} \textbf{67}, 042313 (2003).

\bibitem{Zanardi00}
P.~Zanardi, C.~Zalka, and L.~Faoro,
\newblock {Phys. Rev. A} \textbf{62}, 030301 (2000).

\bibitem{Bravyi02}
S.~Bravyi and A.~Kitaev,
\newblock {Ann. Phys.} \textbf{298}, 210 (2002).

\bibitem{Beenakker04}
C.~W.~J. Beenakker, D.~P. DiVincenzo, C.~Emary, and M.~Kindermann,
\newblock {Phys. Rev. Lett.} \textbf{93}, 020501 (2004).

\bibitem{Kempe01b}
J.~Kempe, D.~Bacon, D.~DiVincenzo, and K.~Whaley,
\newblock {Quant. Inf. Comp.} \textbf{1} (Special Issue December 2001), 33 (2001).

\bibitem{Shi02}
Y.~Shi,
\newblock {Quant. Inf. Comp.} \textbf{3}, 84 (2002).

\bibitem{Kraus01}
B.~Kraus and J.~I. Cirac,
\newblock {Phys. Rev. A} \textbf{63}, 062309 (2001).

\bibitem{Khaneja01}
N.~Khaneja, R.~Breckett, and S.~J. Glaser.
\newblock {Phys. Rev. A} \textbf{63}, 032308 (2001).

\bibitem{Bremner02}
M.~J. Bremner \textit{et al},
\newblock {Phys. Rev. Lett.} \textbf{89}, 247902 (2002).

\bibitem{Bremner10}
M.~J. Bremner, R.~Jozsa, and D.~J. Shepherd,
\newblock {preprint arXiv:quant-ph/1005.1407v1} (2010).

\bibitem{Aaronson11}
S.~Aaronson and A.~Arkhipov,
\newblock {preprint arXiv:quant-ph/1011.3245v1} (2011).

\bibitem{Kempe01a}
J.~Kempe, D.~Bacon, D.~A. Lidar, and K.~Whaley.
\newblock {Phys. Rev. A} \textbf{63}, 042307 (2001).

\bibitem{DiVincenzo00}
D.~P. DiVincenzo, D.~Bacon, J.~Kempe, G.~Burkard, and K.~B. Whaley,
\newblock {Nature} \textbf{408}, 339 (2000).

\bibitem{Kempe02}
J.~Kempe and K.~B. Whaley,
\newblock {Phys. Rev. A} \textbf{65}, 052330 (2002).

\bibitem{Wu02}
L.-A Wu and D.~A. Lidar,
\newblock {J. Math. Phys.} \textbf{43}, 4506 (2002).

\end{thebibliography}

\end{document}